\newtheorem{theorem}{Theorem}[section]
\newtheorem{proposition}[theorem]{Proposition}
\newtheorem{assumption}[theorem]{Assumption}
\newcommand{\R}{{\mathbb R}}
\newcommand{\C}{{\mathbb C}}
\newcommand{\be}{\begin{equation}}
\newcommand{\ee}{\end{equation}}
\newcommand{\bea}{\begin{eqnarray}}
\newcommand{\eea}{\end{eqnarray}}
\newcommand{\ba}{\begin{array}}
\newcommand{\ea}{\end{array}}
\newcommand{\id}{\mathbb{I}}
\newcommand{\re}{\mathrm{Re}}
\newcommand{\lam}{\lambda}
\newcommand{\gam}{\gamma}
\newcommand{\Om}{\Omega}
\newcommand{\dta}{\delta}
\newcommand{\tha}{\theta}
\numberwithin{equation}{section}
\begin{document}
\title[RHP for the 3-wave equation on the half-line]{The unified method for the three-wave equation on the half-line}

\author[J.Xu]{Jian Xu}
\address{School of Mathematical Sciences\\
Fudan University\\
Shanghai 200433\\
People's  Republic of China}
\email{11110180024@fudan.edu.cn}

\author[E.Fan]{Engui Fan*}
\address{School of Mathematical Sciences, Institute of Mathematics and Key Laboratory of Mathematics for Nonlinear Science\\
Fudan University\\
Shanghai 200433\\
People's  Republic of China}
\email{correspondence author:faneg@fudan.edu.cn}

\keywords{Riemann-Hilbert problem, Three-wave equation, Initial-boundary value problem}

\date{\today}

\begin{abstract}
We present a Riemann-Hilbert problem formalism for the initial-boundary value problem of the three-wave equation:
 \[
 p_{ij,t}-\frac{b_i-b_j}{a_i-a_j}p_{ij,x}+\sum_k(\frac{b_k-b_j}{a_k-a_j}-\frac{b_i-b_k}{a_i-a_k})p_{ik}p_{kj}=0,\quad i,j,k=1,2,3,
 \]
on the half-line.

\end{abstract}

\maketitle

\section{\bf Introduction}

The 3-wave resonant interaction model described by the equations
\be\label{3wE}
\ba{ll}
 p_{ij,t}-\frac{b_i-b_j}{a_i-a_j}p_{ij,x}+\sum_k(\frac{b_k-b_j}{a_k-a_j}-\frac{b_i-b_k}{a_i-a_k})p_{ik}p_{kj}=0,\\[4pt]
 i,j,k=1, 2, 3; \ a_i\ne a_j,  b_i\ne b_j, \ {\rm for} \ i\ne j,
 \ea
\ee
 is  one of the important nonlinear models with numerous applications to physics
\cite{LYS}. The $3$- and $N$-wave interaction models describe a special class of wave-wave interactions
that are not sensitive on the physical nature of the waves and bear an universal
character. This explains why they find numerous applications in physics and attract the
attention of the scientific community over the last few decades \cite{ZM,ZM2,ZMNP,K,FT,M,KRB,BC} and
the references therin.
\par
The 3-wave equations can be solved through the inverse scattering method due to the fact that
equation (\ref{3wE})  admits a  Lax representation \cite{ZM2,ZMNP}.  But until the 1990s,  the inverse scattering method was pursued almost entirely for pure initial value problems.
In 1997, Fokas announced a new unified approach for the analysis of inital-boundary value problems for linear
and nonlinear integrable PDEs \cite{f1,f2, f3}. The Fokas method provides a generalization of the
inverse scattering formalism from initial value to IBV problems, and over the last fifteen years, this method has been used
to analyze boundary value problems for several of the most important integrable equations with $2\times 2$
Lax pairs, such as  KdV,  Schr\"{o}dinger,  sine-Gordon, and stationary
axisymmetric Einstein equations, see e.g. \cite{abmfs1,l2}. Just like the IST on the line, the unified method
yields an expression for the solution of an initial-boundary value problem  with  that  of a Riemann-Hilbert problem.
In particular, the asymptotic behavior of the solution can be analyzed in an effective way by employing  the
Riemann-Hilbert problem and  the steepest descent method introduced by Deift and Zhou \cite{dz}.
Recently, Lenells develop a methodology for analyzing initial-boundary value problems for integrable evolution equations
with Lax pairs involving $3\times 3$ matrices \cite{l3}. He also used this method to analyze the Degasperis-Procesi equation
in \cite{l4}.

Pelloni and Pinotsis also  studied  the boundary value problem of the $N-$wave equation by using the unified method \cite{pp}.
Recently, Gerdjikov and Grahovski considered  Cauchy problem of the 3-wave equation with with non-vanishing
initial values \cite{GG}.   In this paper we analyze the initial-boundary value problem of the three-wave equation (\ref{3wE}) on the
half-line.   Compared with these two papers,  there are two  differences
in  our paper.    The first difference
 is that we get the residue conditions of  matrix function $M$ in the Riemann-Hilbert problem  (see (\ref{resM}) in the next section 2).
  The  second difference  is that we  the jump matrix $J$   is explicitly constructed  (  see  the equations (\ref{Jmndef}) and (\ref{Sn}) in the next section 2).
 Of course, the initial-boundary value problem
for  the $3-$wave equation does not need to analysis the global relation, because the initial data and the boundary data are all known.

\par
 The  organization of the paper is as follows.   In the following section 2,  we perform the spectral analysis of the associated Lax pair.
And we formulate the main Riemann-Hilbert problem in section 3.

\section{\bf Spectral Analysis}

Our goal in this section is to define  analytic eigenfunctions of the Lax pair (\ref{3wlax})
which are suitable foe the formulation of a Riemann-Hilbert problem.

\subsection{Lax pair}

We first consider the three-wave equations (\ref{3wE}), with
$(x, t)\in\Om$, and  $\Om$ denoting the half-line domain
$$\Om= \{0 < x < +\infty, 0 < t < T \}$$
and $T > 0$ being a fixed final time. We denote the initial and boundary values by $p_{ij,0}(x)$ and $q_{ij,0}(t)$, respectively
\[
p_{ij,0}(x)=p_{ij}(x,0),\quad q_{ij,0}(t)=p_{ij}(0,t)
\]
with $p_{ij,0}(x)$ and $q_{ij,0}(t)$ are rapidly decaying.
Equation (\ref{3wE}) admits the following Lax representation \cite{LYS}
\be\label{3wlax}
\left\{
\ba{l}
\phi_x=M\phi,\\
\phi_t=N\phi,
\ea
\right.
\ee
where $M=i\lam A+P$ and $N=i\lam B+Q$, with
\be
\ba{ll}
A=\left(\ba{ccc}a_1&0&0\\0&a_2&0\\0&0&a_3\ea\right)&
P=\left(\ba{ccc}0&p_{12}&p_{13}\\p_{21}&0&p_{23}\\p_{31}&p_{32}&0\ea\right)\\
B=\left(\ba{ccc}b_1&0&0\\0&b_2&0\\0&0&b_3\ea\right)
&Q=\left(\ba{ccc}0&n_{12}p_{12}&n_{13}p_{13}\\n_{21}p_{21}&0&n_{23}p_{23}\\n_{31}p_{31}&n_{32}p_{32}&0\ea\right).
\ea
\ee
Obviously,  $trace(A)=trace(B)=0$.
We also assume that $a_1>a_2>a_3$ and $b_1<b_2<b_3$.
By using  transformation
$$\mu=\phi e^{-i\lam Ax-i\lam B t},$$
  we  change Lax pair (2.1) in the form
\be\label{3wstlax}
\left\{
\ba{l}
\mu_x-i\lam [A,\mu]=P\mu,\\
\mu_t-i\lam [B,\mu]=Q\mu,
\ea
\right.
\ee
which can be further   written in differential form as
\be\label{3wclosedf}
d(e^{-i\lam \hat Ax-i\lam\hat Bt}\mu)=W(x,t,\lam),
\ee
where
\be\label{Wdef}
W(x,t,\lam)=e^{-i\lam \hat Ax-i\lam\hat Bt}(Pdx+Qdt)\mu
\ee
and  $e^{\hat A}X=e^{A}Xe^{-A}$.

\subsection{ Spectral functions}
We define three eigenfunctions $\{\mu_j\}_1^3$ of (\ref{3wstlax}) by the Volterra integral equations
\be\label{mujdef}
\mu_j(x,t,\lam)=\id+\int_{\gam_j}e^{(i \lam \hat Ax+i \lam \hat B t)}W_j(x',t',\lam).\qquad j=1,2,3.
\ee
where $W_j$ is given by (\ref{Wdef}) with $\mu$ replaced with $\mu_j$,
and the contours $\{\gam_j\}_1^3$ are showed in Figure 1.
\begin{figure}[th]
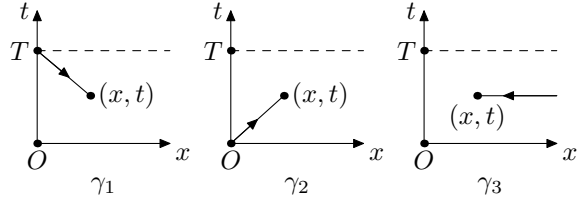

\centering
\includegraphics{3W--HL.1}
\includegraphics{3W--HL.2}
\includegraphics{3W--HL.3}
\caption{The three contours $\gam_1,\gam_2$ and $\gam_3$ in the $(x,t)-$domain.}
\end{figure}
And we have the following inequalities on the contours:
\be
\ba{ll}
\gam_1:&x-x'\ge 0,t-t'\le 0,\\
\gam_2:&x-x'\ge 0,t-t'\ge 0,\\
\gam_3:&x-x'\le 0.
\ea
\ee
So, these inequalities imply that the functions $\{\mu_j\}_1^3$ are bounded and analytic for
$\lam\in\C$ such that $\lam$ belongs to
\be\label{mujbodanydom}
\ba{ll}
\mu_1:&(D_2,\emptyset,D_1),\\
\mu_2:&\emptyset,\\
\mu_3:&(D_1,\emptyset,D_2),
\ea
\ee
where $\{D_n\}_1^2$ denote two open, pairwisely disjoint subsets of the complex $\lam$ plane showed in Figure 2.
\begin{figure}[th]
\centering
\includegraphics{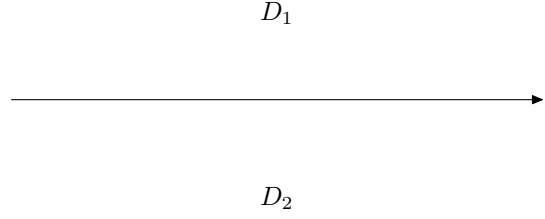}
\caption{The sets $D_n$, $n=1,2$, which decompose the complex $k-$plane.}
\end{figure}
\\
And the sets $\{D_n\}_1^2$ has the following properties:
\[
\ba{l}
D_1=\{\lam\in\C|\re{l_1}<\re{l_2}<\re{l_3},\re{z_1}>\re{z_2}>\re{z_3}\},\\
D_2=\{\lam\in\C|\re{l_1}>\re{l_2}>\re{l_3},\re{z_1}<\re{z_2}<\re{z_3}\},\\
\ea
\]
where $l_i(\lam)$ and $z_i(\lam)$ are the diagonal entries of matrices $i\lam A$ and $i\lam B$, respectively.

\subsection{Matrix valued FUNCTIONS  $M_n$'s}
For each $n=1,2$, define a solution $M_n(x,t,\lam)$ of (\ref{3wstlax}) by the following system of integral equations:
\be\label{Mndef}
(M_n)_{ij}(x,t,\lam)=\dta_{ij}+\int_{\gam_{ij}^n}(e^{(-i \lam \hat Ax-i \lam \hat Bt)}W_n(x',t',\lam))_{ij},
\quad \lam\in D_n,\quad i,j=1,2,3.
\ee
where $W_n$  an $M_n$  are given by (\ref{Wdef}), and the contours
$\gam_{ij}^n$, $n=1,2$, $i,j=1,2,3$ are defined by
\be\label{gamijndef}
\gam_{ij}^n=\left\{\ba{lclcl}\gam_1&if&\re l_i(\lam)<\re l_j(\lam)&and&\re z_i(\lam)\ge\re z_j(\lam),\\
\gam_2&if&\re l_i(\lam)<\re l_j(\lam)&and&\re z_i(\lam)<\re z_j(\lam),\\\gam_3&if&\re l_i(\lam)\ge\re l_j(\lam)&&.\\
\ea\right.\quad \mbox{for }\quad \lam\in D_n.
\ee
\par
The following proposition ascertains that the $M_n$'s defined in this way have the properties required
for the formulation of a Riemann-Hilbert problem.
\begin{proposition}
For each $n=1,2$, the function $M_n(x,t,\lam)$ is well-defined by equation (\ref{Mndef})
for $\lam\in \bar D_n$ and $(x,t)\in \Om$. For any fixed point $(x,t)$, $M_n$ is bounded and analytic as
a function of $\lam\in D_n$ away from a possible discrete set of singularities $\{\lam_j\}$ at which the
Fredholm determinant vanishes. Moreover, $M_n$ admits a bounded and continuous extension to $\bar D_n$ and
\be\label{Mnasy}
M_n(x,t,\lam)=\id+O(\frac{1}{\lam}),\qquad \lam\rightarrow \infty,\quad \lam\in D_n.
\ee
\end{proposition}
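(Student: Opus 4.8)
The plan is to read (\ref{Mndef}) as a linear integral equation $M_n = \id + \mathcal{K}_n M_n$ for the integral operator $\mathcal{K}_n$ associated with the right-hand side, and to extract well-definedness, analyticity, and the asymptotics from the structure of its kernel. The decisive preliminary observation, which I would verify first, is that the contour assignment (\ref{gamijndef}) is engineered precisely so that the scalar exponential appearing in entry $(i,j)$, namely $e^{(l_i-l_j)(x-x')+(z_i-z_j)(t-t')}$, has non-positive real part along $\gam_{ij}^n$ for every $\lambda\in\bar D_n$. Checking this amounts to a short case analysis over the three contour types: on $\gam_3$ one has $x-x'\le 0$ together with $\re l_i\ge\re l_j$; on $\gam_1$ one has $x-x'\ge 0$, $t-t'\le 0$ with $\re l_i<\re l_j$ and $\re z_i\ge\re z_j$; and on $\gam_2$ one has $x-x'\ge 0$, $t-t'\ge 0$ with $\re l_i<\re l_j$ and $\re z_i<\re z_j$. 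In each case both summands in the exponent are products of factors of opposite sign, so the kernel is bounded by $1$ on $\bar D_n$ and decays exponentially wherever the defining inequalities are strict, i.e.\ in the interior $D_n$.

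With this bound in hand, well-definedness follows from standard Volterra/Fredholm estimates. For the entries whose contour is $\gam_1$ or $\gam_2$ the relevant integration runs over the finite interval $[0,T]$ in $t'$, while for the entries assigned to $\gam_3$ the $x'$-integration extends to $+\infty$; there the rapid decay of $P$ (inherited from the rapidly decaying initial data) makes the tail integrable. Consequently $\mathcal{K}_n$ is a bounded, in fact compact, operator. Because different matrix entries are integrated along different contours, the system is not triangular and need not reduce to a pure Volterra equation, so I would invoke Fredholm theory: the determinant $\det(\id-\mathcal{K}_n)$ is analytic in $\lambda$ and tends to $1$ as $\lambda\to\infty$ (since $\|\mathcal{K}_n\|\to 0$), hence is not identically zero and vanishes only on a discrete set $\{\lambda_j\}$. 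Away from $\{\lambda_j\}$ the resolvent $(\id-\mathcal{K}_n)^{-1}$ exists and produces a unique bounded $M_n$.

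For analyticity I would fix $(x,t)\in\Omega$ and note that for $\lambda$ in the open set $D_n$ the kernel is jointly continuous and analytic in $\lambda$, while the decaying exponential gives uniform convergence of the contour integrals on compact subsets of $D_n\setminus\{\lambda_j\}$. Analyticity of $M_n$ then follows by differentiating the resolvent expansion term by term, or equivalently by Morera's theorem applied to each iterate. The continuous extension to $\bar D_n$ uses the same bound from the first paragraph: on $\partial D_n$ the inequalities in (\ref{gamijndef}) degenerate to equalities, so the exponent has vanishing real part, but the kernel stays bounded by $1$ and the integrals still converge, giving continuity up to the boundary.

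Finally, the large-$\lambda$ asymptotics (\ref{Mnasy}) follow from the iterative series $M_n=\sum_{k\ge 0}\mathcal{K}_n^k\id$. The zeroth term is $\id$, and each higher term carries the factor $e^{(l_i-l_j)(x-x')+(z_i-z_j)(t-t')}$ with $l_i-l_j$ and $z_i-z_j$ proportional to $\lambda$; integrating by parts once against this factor, the potentials $P,Q$ being smooth and decaying, yields a gain of $\lambda^{-1}$, so the correction is $O(\lambda^{-1})$ uniformly in $D_n$. I expect the main obstacle to be the genuinely Fredholm (rather than Volterra) nature of the coupled system: unlike the column-wise equations (\ref{mujdef}) for the $\mu_j$, the entrywise definition (\ref{Mndef}) mixes the three contours and need not be triangular, so verifying that $\det(\id-\mathcal{K}_n)$ is not identically zero and controlling the resolvent uniformly away from the discrete singular set $\{\lambda_j\}$ is the step requiring the most care.
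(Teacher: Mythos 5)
Your overall architecture --- read (\ref{Mndef}) as a matrix Fredholm equation $M_n=\id+\mathcal{K}_nM_n$, check that the contour assignment (\ref{gamijndef}) makes every exponential factor of modulus at most $1$ on $\bar D_n$, and obtain analyticity away from the zero set of the Fredholm determinant --- is sound, and your sign analysis over the three contour types is exactly the structural fact the construction is built on. Be aware, however, that the paper itself does not carry out this program: its proof consists of a citation of Appendix B of \cite{l3} for well-definedness, boundedness, analyticity and the continuous extension, plus a purely formal derivation of (\ref{Mnasy}) by substituting the expansion $M=M_0+M^{(1)}/\lam+\cdots$ into the Lax pair (\ref{3wstlax}) and matching powers of $\lam$. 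So you are reconstructing the content of the cited reference rather than paralleling the paper's one-line argument, which is legitimate --- but the reconstruction has a genuine gap at precisely the step you yourself flag as delicate.

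The gap is the claim that $\|\mathcal{K}_n\|\to 0$ as $\lam\to\infty$, which you use both to conclude $\det(\id-\mathcal{K}_n)\to 1$ and to sum the Neumann series $\sum_k\mathcal{K}_n^k\id$ for the asymptotics. This is false as stated. For the diagonal entries ($i=j$) the exponential factor is identically $1$, so that block of the kernel is independent of $\lam$ and its norm is of the size of $\int\|P\|\,dx'$, not small; and even for the off-diagonal entries, a unimodular oscillatory factor does not shrink an operator norm (feed in the conjugate phase times the sign of the kernel and nothing cancels). What rescues the argument is structure, not size: since $P$ and $Q$ have zero diagonal, the diagonal-output part $\mathcal{K}_d$ of $\mathcal{K}_n$ annihilates diagonal inputs, hence $\mathcal{K}_d^2=0$; and for $\lam$ in the open sector $D_n$, away from $\partial D_n$, the off-diagonal exponentials decay exponentially pointwise, so $\|\mathcal{K}_n^2\|\to 0$ along rays in the interior. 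This yields invertibility of $\id-\mathcal{K}_n$ for large interior $\lam$, hence the Fredholm determinant is not identically zero and its zero set is discrete, and it provides the convergence needed to launch the asymptotic expansion. Relatedly, your claim that the $O(1/\lam)$ bound holds uniformly in $D_n$ cannot come from integration by parts alone: on the diagonal there is no oscillatory factor to integrate against (the gain there comes from $P_{ii}=Q_{ii}=0$ together with decay of the off-diagonal iterates), and near $\partial D_n$ you need a uniform bound on $(\id-\mathcal{K}_n)^{-1}$, which is exactly what the Fredholm argument must supply. These are the points that Appendix B of \cite{l3} exists to settle; as written, your proof does not yet close them.
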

\begin{proof}
The bounedness and analyticity properties are established in appendix B in \cite{l3}. And substituting the expansion
\[
M=M_0+\frac{M^{(1)}}{\lam}+\frac{M^{(2)}}{\lam^2}+\cdots,\qquad \lam\rightarrow \infty.
\]
into the Lax pair (\ref{3wstlax}) and comparing the terms of the same order of $\lam$ yield the equation (\ref{Mnasy}).
\end{proof}

\subsection{The jump matrices}

We define spectral functions $S_n(\lam)$, $n=1,2$, and
\be\label{Sndef}
S_n(\lam)=M_n(0,0,\lam),\qquad \lam\in D_n,\quad n=1,2.
\ee
Let $M$ denote the sectionally analytic function on the complex $\lam-$plane which equals $M_n$ for $\lam\in D_n$.
Then $M_n$ satisfies the jump conditions
\be\label{Mjump}
M_1=M_2J,\qquad \lam\in \R,
\ee
where the jump matrices $J(x,t,\lam)$ are defined by
\be\label{Jmndef}
J=e^{(i \lam\hat Ax+i \lam\hat Bt)}(S_2^{-1}S_1).
\ee
According to the definition of the $\gam^n$, we find that
\be\label{gamndef}
\ba{ll}
\gam^1=\left(\ba{lll}\gam_3&\gam_1&\gam_1\\\gam_3&\gam_3&\gam_1\\\gam_3&\gam_3&\gam_3\ea\right)&
\gam^2=\left(\ba{lll}\gam_3&\gam_3&\gam_1\\\gam_3&\gam_3&\gam_3\\\gam_1&\gam_1&\gam_3\ea\right)\\
\ea
\ee

\subsection{The adjugated eigenfunctions}
We will also need the analyticity and boundedness properties of the minors of the matrices $\{\mu_j(x,t,\lam)\}_1^3$.
We recall that the adjugate matrix $X^A$ of a $3\times 3$ matrix $X$ is defined by
\[
X^A=\left(
\ba{ccc}
m_{11}(X)&-m_{12}(X)&m_{13}(X)\\
-m_{21}(X)&m_{22}(X)&-m_{23}(X)\\
m_{31}(X)&-m_{32}(X)&m_{33}(X)
\ea
\right),
\]
where $m_{ij}(X)$ denote the $(ij)$th minor of $X$.
\par
It follows from (\ref{3wstlax}) that the adjugated eigenfunction $\mu^A$ satisfies the Lax pair
\be\label{muadgLaxe}
\left\{
\ba{l}
\mu_x^A+[i\lam A,\mu^A]=-P^T\mu^A,\\
\mu_t^A+[i\lam B,\mu^A]=-Q^T\mu^A.
\ea
\right.
\ee
where $V^T$ denote the transform of a matrix $V$.
Thus, the eigenfunctions $\{\mu_j^A\}_1^3$ are solutions of the integral equations
\be\label{muadgdef}
\mu_j^A(x,t,\lam)=\id-\int_{\gam_j}e^{-i\lam\hat A(x-x')-i\lam B(t-t')}(P^Tdx+Q^T)\mu^A,\quad j=1,2,3.
\ee
Then we can get the following analyticity and boundedness properties:
\be\label{mujadgbodanydom}
\ba{ll}
\mu_1^A:&(D_1,\emptyset,D_2),\\
\mu_2^A:&\emptyset,\\
\mu_3^A:&(D_2,\emptyset,D_1).
\ea
\ee

\subsection{The  computation of jump matrices }

Let us define the $3\times 3-$matrix value spectral functions $s(\lam)$ and $S(\lam)$ by
\begin{subequations}\label{sSdef}
\be\label{mu3mu2s}
\mu_3(x,t,\lam)=\mu_2(x,t,\lam)e^{(i\lam \hat Ax+i\lam\hat Bt)}s(\lam),
\ee
\be\label{mu1mu2S}
\mu_1(x,t,\lam)=\mu_2(x,t,\lam)e^{(i\lam \hat Ax+i\lam\hat Bt)}S(\lam).
\ee
\end{subequations}
Thus,
\be\label{sSmu3mu1}
s(\lam)=\mu_3(0,0,\lam),\qquad S(\lam)=\mu_1(0,0,\lam).
\ee
And we deduce from the properties of $\mu_j$ and $\mu_j^A$ that $s(\lam)$ and $S(\lam)$ have the following
boundedness properties:
\[
\ba{ll}
s(\lam):&(D_1,\emptyset,D_2),\\
S(\lam):&(D_2,\emptyset,D_1),\\
s^A(\lam):&(D_2,\emptyset,D_1),\\
S^A(\lam):&(D_1,\emptyset,D_2).
\ea
\]
Moreover,
\be\label{MnSnrel}
M_n(x,t,\lam)=\mu_2(x,t,\lam)e^{(i\lam \hat Ax+i\lam\hat Bt)}S_n(\lam),\quad \lam\in D_n.
\ee
\begin{proposition}
The $S_n$ can be expressed in terms of the entries of $s(\lam)$ and $S(\lam)$ as follows:
\be\label{Sn}
\ba{l}
S_1=\left(\ba{ccc}s_{11}&\frac{m_{33}(s)M_{21}(S)-m_{23}(s)M_{31}(S)}{(s^TS^A)_{11}}&\frac{S_{13}}{(S^Ts^A)_{33}}\\
s_{21}&\frac{m_{33}(s)M_{11}(S)-m_{13}(s)M_{31}(S)}{(s^TS^A)_{11}}&\frac{S_{23}}{(S^Ts^A)_{33}}\\
s_{31}&\frac{m_{23}(s)M_{11}(S)-m_{13}(s)M_{21}(S)}{(s^TS^A)_{11}}&\frac{S_{33}}{(S^Ts^A)_{33}}\ea\right),\\
S_2=\left(\ba{ccc}\frac{S_{11}}{(S^Ts^A)_{11}}&\frac{m_{21}(s)M_{33}(S)-m_{31}(s)M_{23}(S)}{(s^TS^A)_{33}}&s_{13}\\
\frac{S_{21}}{(S^Ts^A)_{11}}&\frac{m_{11}(s)M_{33}(S)-m_{31}(s)M_{13}(S)}{(s^TS^A)_{33}}&s_{23}\\
\frac{S_{31}}{(S^Ts^A)_{11}}&\frac{m_{11}(s)M_{23}(S)-m_{21}(s)M_{13}(S)}{(s^TS^A)_{33}}&s_{33}\ea\right),
\ea
\ee
where $m_{ij}$ and $M_{ij}$ denote that the $(i,j)-$th minor of $s$ and $S$, respectively.
\end{proposition}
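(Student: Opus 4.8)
The plan is to play the three defining relations (\ref{mu3mu2s}), (\ref{mu1mu2S}) and (\ref{MnSnrel}) against the column-wise boundedness domains (\ref{mujbodanydom}) and (\ref{mujadgbodanydom}). Writing $E=e^{i\lam(Ax+Bt)}$ for the diagonal exponential, so that $e^{(i\lam\hat Ax+i\lam\hat Bt)}X=EXE^{-1}$, the three relations read $\mu_1=\mu_2ESE^{-1}$, $\mu_3=\mu_2EsE^{-1}$ and $M_n=\mu_2ES_nE^{-1}$. Eliminating $\mu_2$ produces the two representations
\be\label{plan-rep}
M_n=\mu_1E(S^{-1}S_n)E^{-1}=\mu_3E(s^{-1}S_n)E^{-1},\qquad \lam\in D_n,
\ee
which I will read off column by column. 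Since $P$ and $Q$ are trace-free, every eigenfunction has unit determinant, so $\det s=\det S=1$ and therefore $s^{-1}=(s^A)^T$, $S^{-1}=(S^A)^T$. This adjugate identity is exactly what turns entries of $s^{-1}$ and $S^{-1}$ into the minors $m_{ij}(s)$, $M_{ij}(S)$ occurring in (\ref{Sn}); in particular it rewrites the normalizing scalars as $(s^TS^A)_{11}=(S^{-1}s)_{11}$ and $(S^Ts^A)_{33}=(s^{-1}S)_{33}$, so each denominator in (\ref{Sn}) is a single entry of $S^{-1}s$ or $s^{-1}S$.

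First I would settle the two columns of each $S_n$ that survive as columns of $s$ or $S$. In $D_1$ the only bounded columns available in (\ref{plan-rep}) are the first column of $\mu_3$ and the third column of $\mu_1$, and in $D_2$ they are the third column of $\mu_3$ and the first column of $\mu_1$. The contour pattern (\ref{gamndef}) shows that one of these columns of $M_n$ is in fact assembled along a single contour; by uniqueness of the Volterra solutions it then coincides with the corresponding column of $\mu_3$ or $\mu_1$, giving a column of $S_n$ equal to a column of $s$ or $S$ outright. The remaining such column is built from mixed contours, so boundedness in $D_n$ fixes it only up to a scalar, i.e.\ as a scalar multiple of a column of $S$ (or $s$). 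Evaluating the resulting column identities at $(x,t)=(0,0)$, where $\mu_2=\id$ and $E=\id$ and where (\ref{sSmu3mu1}), (\ref{Sndef}) apply, yields the first and third columns displayed in (\ref{Sn}), with the scalar postponed.

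The genuine obstacle is the middle column, which in $D_n$ is bounded through neither $\mu_1$ nor $\mu_3$. Here I would pass to the adjugated picture: applying the multiplicative cofactor map to (\ref{plan-rep}), and using $E^A=E^{-1}$ (valid because $\det E=1$), gives
\be\label{plan-adj}
M_n^A=\mu_1^AE^{-1}\big((S^A)^{-1}S_n^A\big)E=\mu_3^AE^{-1}\big((s^A)^{-1}S_n^A\big)E.
\ee
Now the domains (\ref{mujadgbodanydom}) of $\mu_1^A,\mu_3^A$ supply the bounded adjugate columns missing before, and matching them forces the first and third columns of $S_n^A$ to be scalar multiples of the corresponding columns of $S^A$ and $s^A$. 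Because the minors making up these two columns of $S_n^A$ pair the unknown middle column of $S_n$ with its already-known first and third columns, solving the resulting linear system recovers the middle column of $S_n$ and is precisely what manufactures the bilinear products $m_{ab}(s)M_{cd}(S)$ in (\ref{Sn}): the $m_{ab}(s)$ factors come from the column of $s^A$ and the $M_{cd}(S)$ factors from the column of $S^A$. Finally I would fix every outstanding scalar — the denominators $(s^TS^A)_{11}$ and $(S^Ts^A)_{33}$ — by imposing $\det S_n=1$, inherited from $\det M_n=1$, which simultaneously checks that the two representations in (\ref{plan-rep}) and (\ref{plan-adj}) are consistent.

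The hard part is thus the middle column: keeping the $3\times3$ cofactor bookkeeping straight — the signs $(-1)^{i+j}$, the index placement, and the precise pairing of a minor of $s$ with a minor of $S$ — while translating the boundedness of $M_n^A$ in (\ref{plan-adj}) back into the entries of $S_n$. By contrast, the pure and scalar-multiple columns are routine once the boundedness domains are in hand.
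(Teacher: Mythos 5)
Your structural setup coincides with the paper's: eliminating $\mu_2$ gives exactly the relations (\ref{MnRnSnTn}), with your $S^{-1}S_n$ and $s^{-1}S_n$ being precisely the paper's $R_n$ and $T_n$, and your identification of the single-contour (``pure'') columns through uniqueness of Volterra solutions is sound --- that part is rigorous and matches what the paper's vanishing conditions encode. The genuine gap is the step on which every remaining column of your sketch rests: the claim that boundedness of $M_n$ in $D_n$ forces the mixed-contour columns (and, in the adjugate picture, the columns of $S_n^A$) to be scalar multiples of columns of $S$, $s$, $S^A$, $s^A$. Boundedness cannot deliver identical vanishing of the unwanted coefficients. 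Concretely, write $[M_1]_3=\sum_k[\mu_1]_k e^{\tha_{k3}}(R_1)_{k3}$ and evaluate at $(x,t)=(0,T)$, where $\mu_1=\id$: boundedness of $(M_1)_{13}(0,T,\cdot)$ on $D_1$ yields only $|(R_1)_{13}(\lam)|\le C e^{-\re(z_1-z_3)(\lam)T}$, i.e.\ exponential smallness as $\lam\to\infty$ in $D_1$, and no constraint at all on compact subsets of $D_1$; it does not yield $(R_1)_{13}\equiv 0$, which is what the exact formulas (\ref{Sn}) require. The same objection applies to your adjugate step, so the middle column is not actually pinned down by your argument.

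What produces exact vanishing in the paper is an evaluation argument, not a boundedness argument. After truncating $\gam_3$ at $X_0$ (without this truncation the matrix $T_n$, which requires evaluating $M_n$ at the initial point of $\gam_3$, is not even defined --- a device your sketch omits), the paper uses (\ref{RnTnX0}): $R_n=e^{-i\lam\hat BT}M_n(0,T,\lam;X_0)$ and $T_n=e^{-i\lam\hat AX_0}M_n(X_0,0,\lam;X_0)$. An entry $(M_n)_{ij}$ defined along $\gam_1$ (resp.\ $\gam_2$, $\gam_3$) has identically vanishing integral term at $(0,T)$ (resp.\ $(0,0)$, $(X_0,0)$), the initial point of that contour; hence the corresponding off-diagonal entries of $R_n$ (resp.\ $S_n$, $T_n$) are exactly zero and the diagonal entries of $T_n$ equal $1$. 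This turns (\ref{sSRnSnTn}) into $18$ scalar equations for $18$ unknowns, and solving that linear system by Cramer's rule is what generates the products of minors in (\ref{Sn}) --- no adjugate detour is needed --- after which one lets $X_0\to\infty$. Note also that your final normalization is underdetermined: $\det S_n=1$ is a single equation, while your sketch leaves at least two free scalars per $S_n$; in the paper the normalization comes for free from the unit diagonal of $T_n$. If you replace the boundedness claims by this evaluation argument, your column-by-column bookkeeping does go through.
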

\begin{proof}
Let $\gam_3^{X_0}$ denote the contour $(X_0,0)\rightarrow (x,t)$ in the $(x,t)-$plane, here $X_0>0$ is a constant.
We introduce $\mu_3(x,t,k;X_0)$ as the solution of (\ref{mujdef}) with $j=3$ and with the contour $\gam_3$ replaced
by $\gam_3^{X_0}$. Similarly, we define $M_n(x,t,\lam;X_0)$ as the solution of (\ref{Mndef}) with $\gam_3$ replaced
by $\gam_3^{X_0}$. We will first derive expression for $S_n(\lam;X_0)=M_n(0,0,\lam;X_0)$ in terms of $S(\lam)$ and
$s(\lam;X_0)=\mu_3(0,0,\lam;X_0)$. Then (\ref{Sn}) will follow by taking the limit $X_0\rightarrow \infty$.
\par
First, We have the following relations:
\be\label{MnRnSnTn}
\left\{
\ba{l}
M_n(x,t,\lam;X_0)=\mu_1(x,t,\lam)e^{(i\lam \hat Ax+i\lam\hat Bt)}R_n(\lam;X_0),\\
M_n(x,t,\lam;X_0)=\mu_2(x,t,\lam)e^{(i\lam \hat Ax+i\lam\hat Bt)}S_n(\lam;X_0),\\
M_n(x,t,\lam;X_0)=\mu_3(x,t,\lam)e^{(i\lam \hat Ax+i\lam\hat Bt)}T_n(\lam;X_0).
\ea
\right.
\ee
Then we get $R_n(\lam;X_0)$ and $T_n(\lam;X_0)$ are defined as follows:
\begin{subequations}\label{RnTnX0}
\be\label{RnX0}
R_n(\lam;X_0)=e^{-i\lam\hat BT}M_n(0,T,\lam;X_0),
\ee
\be\label{TnX0}
T_n(\lam;X_0)=e^{-i\lam\hat AX_0}M_n(X_0,0,\lam;X_0).
\ee
\end{subequations}
The relations (\ref{MnRnSnTn}) imply that
\be\label{sSRnSnTn}
s(\lam;X_0)=S_n(\lam;X_0)T^{-1}_n(\lam;X_0),\qquad S(\lam)=S_n(\lam;X_0)R^{-1}_n(\lam;X_0).
\ee
These equations constitute a matrix factorization problem which, given $\{s(\lam),S(\lam)\}$ can be solved for the $\{R_n,S_n,T_n\}$.
Indeed, the integral equations (\ref{Mndef}) together with the definitions of $\{R_n,S_n,T_n\}$ imply that
\be
\left\{
\ba{lll}
(R_n(\lam;X_0))_{ij}=0&if&\gam_{ij}^n=\gam_1,\\
(S_n(\lam;X_0))_{ij}=0&if&\gam_{ij}^n=\gam_2,\\
(T_n(\lam;X_0))_{ij}=0&if&\gam_{ij}^n=\gam_3.
\ea
\right.
\ee
It follows that (\ref{sSRnSnTn}) are 18 scalar equations for 18 unknowns. By computing the explicit solution of this
algebraic system, we find that $\{S_n(\lam;X_0)\}_1^2$ are given by the equation obtained from (\ref{Sn}) by replacing
$\{S_n(\lam),s(\lam)\}$ with $\{S_n(\lam;X_0),s(\lam;X_0)\}$. taking $X_0\rightarrow \infty$ in this equation,
we arrive at (\ref{Sn}).
\end{proof}

\subsection{The residue conditions}
Since $\mu_2$ is an entire function, it follows from (\ref{MnSnrel}) that M can only have singularities at the points
where the $S_n's$ have singularities. We infer from the explicit formulas (\ref{Sn}) that the possible singularities
of $M$ are as follows:
\begin{itemize}
\item $[M]_1$ could have poles in $D_2$ at the zeros of $(S^Ts^A)_{11}(\lam)$;
\item $[M]_2$ could have poles in $D_1$ at the zeros of $(s^TS^A)_{11}(\lam)$;
\item $[M]_2$ could have poles in $D_2$ at the zeros of $(s^TS^A)_{33}(\lam)$;
\item $[M]_3$ could have poles in $D_1$ at the zeros of $(S^Ts^A)_{33}(\lam)$.
\end{itemize}
We denote the above possible zeros by $\{\lam_j\}_1^N$ and assume they satisfy the following assumption.
\begin{assumption}
We assume that
\begin{itemize}
\item $(S^Ts^A)_{11}(\lam)$ has $n_0$ possible simple zeros in $D_2$ denoted by $\{\lam_j\}_1^{n_0}$;
\item $(s^TS^A)_{11}(\lam)$ has $n_1-n_0$ possible simple zeros in $D_1$ denoted by $\{\lam_j\}_{n_0+1}^{n_1}$;
\item $(s^TS^A)_{33}(\lam)$ has $n_2-n_1$ possible simple zeros in $D_2$ denoted by $\{\lam_j\}_{n_1+1}^{n_2}$;
\item $(S^Ts^A)_{33}(\lam)$ has $n_3-n_2$ possible simple zeros in $D_1$ denoted by $\{\lam_j\}_{n_2+1}^{N}$;
\end{itemize}
and that none of these zeros coincide. Moreover, we assume that none of these functions have zeros on the boundaries
of the $D_n$'s.
\end{assumption}
We determine the residue conditions at these zeros in the following:
\begin{proposition}\label{propos}
Let $\{M_n\}_1^2$ be the eigenfunctions defined by (\ref{Mndef}) and assume that the set $\{\lam_j\}_1^N$ of singularities
are as the above assumption. Then the following residue conditions hold:
\begin{subequations}\label{resM}
\begin{align}
&\ba{l}{Res}_{\lam=\lam_j}[M]_1=\frac{1}{\dot{ (S^Ts^A)_{11}(\lam_j)}}\frac{(S_{11}s_{23}-S_{21}s_{13})(\lam_j)}{m_{31}(\lam_j)}e^{\tha_{21}(\lam_j)}[M(\lam_j)]_2,\\
\quad 1\le j\le n_0,\lam_j\in D_2\ea,\label{M21D2res}\\
&\ba{l}{Res}_{\lam=\lam_j}[M]_2=-\frac{1}{\dot{ (s^TS^A)_{11}(\lam_j)}}\frac{M_{21}(S^Ts^A)_{33}(\lam_j)}{ (S_{13}(\lam_j)s_{31}(\lam_j)-S_{33}(\lam_j)s_{11}(\lam_j))}e^{\tha_{12}(\lam_j)}[M(\lam_j)]_1,\\
\quad n_0< j\le n_1,\lam_j\in D_1\ea,\label{M12D1res}\\
&\ba{l}{Res}_{\lam=\lam_j}[M]_2=-\frac{1}{\dot{ (s^TS^A)_{33}(\lam_j)}}\frac{M_{23}(S^Ts^A)_{11}(\lam_j)}{ (s_{13}(\lam_j)S_{31}(\lam_j)-s_{33}(\lam_j)S_{11}(\lam_j))}e^{\tha_{32}(\lam_j)}[M(\lam_j)]_3,\\
\quad n_1< j\le n_2,\lam_j\in D_2\ea,\label{M22D2res}\\
&\ba{l}{Res}_{\lam=\lam_j}[M]_3=\frac{1}{\dot{ (S^Ts^A)_{33}(\lam_j)}}\frac{(S_{13}s_{21}-S_{23}s_{11})(\lam_j)}{m_{33}(\lam_j)}e^{\tha_{23}(\lam_j)}[M(\lam_j)]_2,\\
\quad n_2< j\le N,\lam_j\in D_1\ea,\label{M13D1res}
\end{align}
\end{subequations}
where $\dot f=\frac{df}{d\lam}$, and $\tha_{ij}$ is defined by
\be\label{thaijdef}
\tha_{ij}(x,t,\lam)=(l_i-l_j)x+(z_i-z_j)t,\quad i,j=1,2,3.
\ee
\end{proposition}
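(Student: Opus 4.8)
The plan is to reduce everything to column relations plus a single algebraic lemma about minors. Writing $M=M_n$ on $D_n$ and reading the matrix identity (\ref{MnSnrel}) column by column — recalling that $e^{(i\lam\hat Ax+i\lam\hat Bt)}$ acts by conjugation and hence multiplies the $(i,k)$ entry by the scalar $e^{\tha_{ik}}$ with $\tha_{ik}$ as in (\ref{thaijdef}) — the $k$-th column of $M_n$ is
\be
[M_n]_k=\sum_{i=1}^3[\mu_2]_i\,e^{\tha_{ik}}(S_n)_{ik}.
\ee
Since $\mu_2$ is entire, the only singularities of $M$ are those of the entries of $S_n$ listed in (\ref{Sn}); by the Assumption the zeros are simple and disjoint, so at each $\lam_j$ exactly one column of one $M_n$ is singular and the poles are simple. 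Comparing (\ref{Sn}) with the column relations that (\ref{mu3mu2s}) and (\ref{mu1mu2S}) give, namely $[\mu_3]_k=\sum_i[\mu_2]_i e^{\tha_{ik}}s_{ik}$ and $[\mu_1]_k=\sum_i[\mu_2]_i e^{\tha_{ik}}S_{ik}$, I first record the four ``clean'' columns
\be
[M_1]_1=[\mu_3]_1,\quad [M_2]_3=[\mu_3]_3,\quad [M_1]_3=\frac{[\mu_1]_3}{(S^Ts^A)_{33}},\quad [M_2]_1=\frac{[\mu_1]_1}{(S^Ts^A)_{11}},
\ee
which already display the four denominators responsible for the poles.

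Next I treat the four residues uniformly. In each case the residue at a simple zero $\lam_j$ equals $1/\dot f(\lam_j)$ times the numerator column of the singular column, where $f$ is the governing determinant. Using the cocycle identity $\tha_{ik}=\tha_{ij}+\tha_{jk}$, the common exponential factors out as $e^{\tha_{pq}}$ with $pq\in\{21,12,32,23\}$, exactly matching the prefactors in (\ref{resM}). The heart of the matter is the claim that, at a zero of $f$, the numerator column of the singular column of $S_n$ is \emph{proportional} to the column of $s$ or $S$ defining the target clean column. For (\ref{M12D1res}), writing $N_{i2}=(s^TS^A)_{11}(S_1)_{i2}$ for the numerators of the middle column of $S_1$, I must show that $(N_{12},N_{22},N_{32})\parallel(s_{11},s_{21},s_{31})$ when $(s^TS^A)_{11}(\lam_j)=0$. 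I will prove this by verifying that every $2\times2$ minor of the $3\times2$ array with these two columns vanishes: each such minor reduces, upon substituting $(s^Ts^A)_{11}=s_{11}M_{11}-s_{21}M_{21}+s_{31}M_{31}=0$ (with $M_{ij}$ the minors of $S$ as in (\ref{Sn})), to a multiple of the Laplace identity $m_{13}(s)s_{11}-m_{23}(s)s_{21}+m_{33}(s)s_{31}=0$, which is the expansion of a determinant having two equal columns. The same cofactor computation handles the remaining three cases with the indices permuted. Once proportionality is known, the residue collapses to a scalar multiple of the relevant clean column, i.e. of $[M]_1$, $[M]_2$ or $[M]_3$ at $\lam_j$, which is precisely the form of (\ref{resM}).

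Finally, the explicit scalar in each condition is obtained by evaluating the proportionality constant on one convenient component of the vector identity and simplifying the resulting ratio of minors, again using the vanishing-determinant relation and the adjugate identities; this is what turns $\al=N_{i2}/s_{i1}$ into the stated $-M_{21}(S^Ts^A)_{33}/(S_{13}s_{31}-S_{33}s_{11})$ and its three analogues. I expect the bookkeeping of signs and $2\times2$ minors in this last simplification — keeping the convention $(X^A)_{ij}=(-1)^{i+j}m_{ij}(X)$ consistent across all four cases — to be the main obstacle, while the structural steps (the column identities and the collapse via proportionality) are routine. As in the previous proposition, all computations may first be carried out with the regularized functions $S_n(\lam;X_0)$, $s(\lam;X_0)$, for which the factorization is exact, and the stated residue conditions then follow by passing to the limit $X_0\to\infty$.
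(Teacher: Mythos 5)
Your proposal is correct in outline, and its central mechanism is genuinely different from the paper's. The paper writes out all six column relations of $M_n=\mu_2e^{(i\lam\hat Ax+i\lam\hat Bt)}S_n$ (equations (\ref{M11})--(\ref{M23})) and then, for each case, \emph{solves} the two non-singular column relations for two of the entire columns $[\mu_2]_i$ and substitutes into the singular relation; this yields an identity valid throughout $D_n$ (for instance $[M_1]_2$ expressed through $[M_1]_1$, $[M_1]_3$ and $[\mu_2]_2$, with the pole sitting only in the coefficient of $[M_1]_1$), from which the residue condition and its explicit constant are read off in one stroke. You instead evaluate the residue directly as $1/\dot f(\lam_j)$ times the numerator column and prove a pointwise proportionality lemma at the zero, showing every $2\times2$ minor of the relevant $3\times2$ array vanishes via $f(\lam_j)=0$ plus a Laplace expansion with a repeated column; this is a different key lemma, and your verification for (\ref{M12D1res}) is sound: with your $N_{i2}=(s^TS^A)_{11}(S_1)_{i2}$ one has $N_{12}s_{21}-N_{22}s_{11}=m_{33}(M_{21}s_{21}-M_{11}s_{11})+M_{31}(m_{13}s_{11}-m_{23}s_{21})=M_{31}(m_{13}s_{11}-m_{23}s_{21}+m_{33}s_{31})=0$ once $(s^TS^A)_{11}=0$ is used. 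Two points need attention before this is complete. First, your uniform statement that the numerator column of the singular column is proportional to ``the column of $s$ or $S$ defining the target clean column'' literally fits only (\ref{M12D1res}) and (\ref{M22D2res}); in (\ref{M21D2res}) and (\ref{M13D1res}) the singular column is itself the clean one ($[M_2]_1=[\mu_1]_1/(S^Ts^A)_{11}$, resp.\ $[M_1]_3=[\mu_1]_3/(S^Ts^A)_{33}$) while the target is the composite middle column, so the proportionality must be proved between $(S_{11},S_{21},S_{31})$ (resp.\ $(S_{13},S_{23},S_{33})$) and the numerator vector of the middle column of $S_2$ (resp.\ $S_1$); the same cofactor computation does go through, and the target's denominator $(s^TS^A)_{33}$ (resp.\ $(s^TS^A)_{11}$) then correctly enters the final constant. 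Second, your route postpones the identification of the explicit scalars in (\ref{resM}) — which are the actual content of the proposition — to a final simplification you do not carry out; this is exactly where the paper's elimination is more efficient, since solving the $2\times2$ system produces the coefficient $-M_{21}(S^Ts^A)_{33}/\bigl((s^TS^A)_{11}(S_{13}s_{31}-S_{33}s_{11})\bigr)$ of $e^{\tha_{12}}[M_1]_1$ with no further matching needed, whereas you must still check that your constant $N_{12}/s_{11}$ equals $-M_{21}(S^Ts^A)_{33}/(S_{13}s_{31}-S_{33}s_{11})$ at the zero (both expressions compute the same simple-pole residue, so the identity is forced, but a self-contained write-up should verify it). Finally, the $X_0$-regularization you import from the previous proposition is unnecessary here: once (\ref{Sn}) and (\ref{MnSnrel}) are established, the residue computation is exact as it stands.
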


\begin{proof}
We will prove (\ref{M21D2res}), (\ref{M12D1res}), the other conditions follow by similar arguments.
Equation (\ref{MnSnrel}) implies the relation
\begin{subequations}
\be\label{M1S1}
M_1=\mu_2e^{(i\lam \hat Ax+i\lam\hat Bt)}S_1,
\ee
\be\label{M2S2}
M_2=\mu_2e^{(i\lam \hat Ax+i\lam\hat Bt)}S_2.
\ee
\end{subequations}
In view of the expressions for $S_1$ and $S_2$ given in (\ref{Sn}), the three columns of (\ref{M1S1}) read:
\begin{subequations}
\begin{align}
&[M_1]_1=[\mu_2]_1s_{11}(\lam)+[\mu_2]_2e^{\tha_{21}}s_{21}(\lam)+[\mu_2]_3 e^{\tha_{31}}s_{31}(\lam),\label{M11}\\
&\ba{ll}[M_1]_2=&[\mu_2]_1e^{\tha_{12}}\frac{m_{33}M_{21}-m_{23}M_{31}}{(s^TS^A)_{11}}(\lam)+
[\mu_2]_2\frac{m_{33}M_{11}-m_{13}M_{31}}{(s^TS^A)_{11}}(\lam)\\&+
[\mu_2]_3e^{\tha_{32}}\frac{m_{23}M_{11}-m_{13}M_{21}}{(s^TS^A)_{11}}(\lam)\ea,\label{M12}\\
&[M_1]_3=[\mu_2]_1e^{\tha_{13}}\frac{S_{13}}{(S^Ts^A)_{33}}(\lam)+
[\mu_2]_2e^{\tha_{23}}\frac{S_{23}}{(S^Ts^A)_{33}}(\lam)
+[\mu_2]_3\frac{S_{33}}{(S^Ts^A)_{33}}(\lam).\label{M13}
\end{align}
\end{subequations}
while the three columns of (\ref{M2S2}) read:
\begin{subequations}
\begin{align}
&[M_2]_1=[\mu_2]_1\frac{S_{11}}{(S^Ts^A)_{11}}(\lam)+
[\mu_2]_2e^{\tha_{21}}\frac{S_{21}}{(S^Ts^A)_{11}}(\lam)
+[\mu_2]_3e^{\tha_{31}}\frac{S_{31}}{(S^Ts^A)_{11}}(\lam),\label{M21}\\
&\ba{ll}[M_2]_2=&[\mu_2]_1e^{\tha_{12}}\frac{m_{21}M_{33}-m_{31}M_{23}}{(s^TS^A)_{33}}(\lam)+
[\mu_2]_2\frac{m_{11}M_{33}-m_{31}M_{13}}{(s^TS^A)_{33}}(\lam)\\&+
[\mu_2]_3e^{\tha_{32}}\frac{m_{11}M_{23}-m_{21}M_{13}}{(s^TS^A)_{33}}(\lam)\ea,\label{M22}\\
&[M_2]_3=[\mu_2]_1s_{13}e^{\tha_{13}}+[\mu_2]_2s_{23}e^{\tha_{23}}+[\mu_2]_3s_{33}.\label{M23}
\end{align}
\end{subequations}
We first suppose that $\lam_j\in D_2$ is a simple zero of $(S^Ts^A)_{11}(\lam)$. Solving (\ref{M22})
and (\ref{M23}) for $[\mu_2]_1$ and $[\mu_2]_2$ and substituting the result in to (\ref{M21}), we find
\[
\ba{rl}[M_1]_1=&\frac{S_{11}s_{23}-S_{21}s_{13}}{(S^Ts^A)_{11}m_{31}}e^{\tha_{21}}[M_2]_2+
\frac{M_{33}}{m_{31}}e^{\tha_{31}}[M_2]_3\\
&+\frac{1}{m_{31}}e^{\tha_{31}}[\mu_2]_3\ea.
\]
Taking the residue of this equation at $\lam_j$, we find the condition (\ref{M21D2res}) in the case
when $\lam_j\in D_2$. Similarly, we can get the equation (\ref{M13D1res}).
\par
Then let us consider that $\lam_j\in D_1$ is a simple zero of $(s^TS^A)_{11}(\lam)$. Solving (\ref{M11})
and (\ref{M13}) for $[\mu_2]_1$ and $[\mu_2]_3$ and substituting the result in to (\ref{M12}), we find
\[
\ba{rl}[M_1]_2=&-\frac{M_{21}(S^Ts^A)_{33}}{(s^TS^A)_{11}(S_{13}s_{31}-S_{33}s_{11})}e^{\tha_{12}}[M_1]_1-
\frac{(S^Ts^A)_{33}}{S_{13}s_{31}-S_{33}s_{11}}[\mu_2]_2\\
&-\frac{m_{23}(S^Ts^A)_{33}}{S_{13}s_{31}-S_{33}s_{11}}e^{\tha_{32}}[M_1]_3\ea.
\]
Taking the residue of this equation at $\lam_j$, we find the condition (\ref{M12D1res}) in the case
when $\lam_j\in D_1$. Similarly, we can get the equation (\ref{M22D2res}).
\end{proof}

\section{\bf The Riemann-Hilbert problem}

The sectionally analytic function $M(x,t,\lam)$ defined in section 2 satisfies a Riemann-Hilbert problem
which can be formulated in terms of the initial and boundary values of $p_{ij}(x,t)$. By solving this
Riemann-Hilbert problem, the solution of (\ref{3wE}) can be recovered for all values of $x,t$.
\begin{theorem}
Suppose that $p_{ij}(x,t)$ are a solution of (\ref{3wE}) in the half-line domain $\Om$ with sufficient
smoothness and decays as $x\rightarrow \infty$. Then $p_{ij}(x,t)$ can be reconstructed from the initial
value $\{p_{ij,0}(x)\}_{i,j=1}^3$ and boundary values $\{q_{ij,0}(t)\}_{i,j=1}^3$ defined as follows,
\be\label{inibouvalu}
p_{ij,0}(x)=p_{ij}(x,0),\quad q_{ij,0}(t)=p_{ij}(0,t).
\ee
\par
Use the initial and boundary data to define the jump matrices $J(x,t,\lam)$ as well as the spectral $s(\lam)$
and $S(\lam)$ by equation (\ref{sSdef}). Assume that the possible zeros $\{\lam_j\}_1^N$ of the functions $(S^Ts^A)_{33}(\lam),(s^TS^A)_{11}(\lam)$, $(s^TS^A)_{33}(\lam),(S^Ts^A)_{33}(\lam)$  are as in assumption 2.3.
\par
Then the solution $\{p_{ij}(x,t)\}_{i,j=1}^3$ is given by
\be\label{usolRHP}
p_{ij}(x,t)=-i(a_i-a_j)\lim_{\lam\rightarrow \infty}(\lam M(x,t,\lam))_{ij}.
\ee
where $M(x,t,\lam)$ satisfies the following $3\times 3$ matrix Riemann-Hilbert problem:
\begin{itemize}
\item $M$ is sectionally meromorphic on the complex $\lam-$plane with jumps across the contour $\R$, see Figure 2.
\item Across the contour $\R$, $M$ satisfies the jump condition
      \be\label{MRHP}
      M_1(x,t,\lam)=M_2(x,t,\lam)J(x,t,\lam),\quad \lam\in \R.
      \ee
      where the jump $J$ is defined by the equation (\ref{Jmndef}).
\item $M(x,t,\lam)=\id+O(\frac{1}{\lam}),\qquad \lam\rightarrow \infty$.
\item The residue condition of $M$ is showed in Proposition \ref{propos}.
\end{itemize}
\end{theorem}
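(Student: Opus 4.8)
The statement is a reconstruction (representation) result resting on the spectral analysis of Section~2, so I would organize the proof into three movements: first, confirm that the sectionally analytic $M$ assembled from the eigenfunctions $\{M_n\}_1^2$ already satisfies every clause of the Riemann-Hilbert problem; second, show the problem pins down $M$ uniquely; third, recover $p_{ij}$ from the large-$\lam$ behaviour of $M$. The first movement is essentially bookkeeping from material already in hand: the jump relation (\ref{MRHP}) is (\ref{Mjump}) with $J$ as in (\ref{Jmndef}); the normalization $M=\id+O(1/\lam)$ is (\ref{Mnasy}); and the meromorphic structure with its residues is exactly Proposition~\ref{propos}. I would therefore begin by recording that $M:=M_n$ on $D_n$ is a solution of the stated RHP.

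The reconstruction formula (\ref{usolRHP}) comes straight out of the $x$-equation of the Lax pair (\ref{3wstlax}). Substituting $M=\id+\lam^{-1}M^{(1)}+O(\lam^{-2})$ into $M_x-\I\lam[A,M]=PM$ and isolating the coefficient of $\lam^{0}$ gives $P=-\I[A,M^{(1)}]$. Because the $(i,j)$ entry of $[A,M^{(1)}]$ is $(a_i-a_j)M^{(1)}_{ij}$ while $\id$ contributes nothing off the diagonal, the off-diagonal entries read $p_{ij}=-\I(a_i-a_j)M^{(1)}_{ij}=-\I(a_i-a_j)\lim_{\lam\to\infty}(\lam M)_{ij}$, which is (\ref{usolRHP}). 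This is routine once the expansion behind (\ref{Mnasy}) is available.

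The substantive step, and the one I expect to be the main obstacle, is uniqueness of the RHP, complicated by the fact that $M$ is only meromorphic, carrying the poles of the assumption on $\{\lam_j\}_1^N$. I would first establish $\det J\equiv1$ on $\R$: the tracelessness of $A,B,P,Q$ forces $\det\mu_j\equiv1$, hence $\det s=\det S=1$ and, through (\ref{MnSnrel}), $\det S_n=1$, so $\det M$ has no jump; normalized to $1$ at infinity with its zeros and poles cancelling, $\det M\equiv1$, so any solution is invertible with $M^{-1}=M^A$. Given two solutions $M,\tilde M$, the ratio $\tilde M M^{-1}$ then has no jump across $\R$ (the factor $J$ cancels) and tends to $\id$ at infinity; if I can show its singularities at the $\{\lam_j\}$ are removable, Liouville's theorem gives $\tilde M M^{-1}\equiv\id$ and hence uniqueness.

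The removability is the delicate point: it has to be extracted from the fact that $M$ and $\tilde M$ obey the \emph{same} residue relations (\ref{resM}), each of which ties the principal part of one column of $M$ to another, regular column. Matching these rank-one residue structures so that the principal parts of $\tilde M M^{-1}$ cancel at every $\lam_j$ is where the explicit formulas (\ref{Sn}) and (\ref{resM}) must be used, and where the $3\times3$ nature of the problem makes the bookkeeping heavier than in the $2\times2$ case. Once uniqueness is secured the proof closes: the $M$ singled out by the RHP is the eigenfunction-built $M$, whose asymptotics return the assumed $p_{ij}$ through (\ref{usolRHP}); and should one wish to verify directly that the reconstructed potential satisfies (\ref{3wE}), one differentiates $M$ and uses that it solves both equations of (\ref{3wstlax}) simultaneously, their compatibility being exactly the three-wave system.
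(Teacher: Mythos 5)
Your proposal is correct, and its first two movements are exactly the paper's proof. The paper's own argument consists of nothing more than the observation that the jump relation, the normalization, and the residue structure were already established in Section 2 (via (\ref{Mjump}), (\ref{Mnasy}) and Proposition \ref{propos}), so that ``it only remains to prove (\ref{usolRHP})'', followed by precisely your computation: insert $M=\id+M^{(1)}/\lam+\cdots$ into the $x$-part of (\ref{3wstlax}), conclude from the $O(\lam)$ terms that the leading coefficient is diagonal (hence $\id$ after comparing diagonal entries at $O(1)$), and read $-\I[A,M^{(1)}]=P$ off the off-diagonal $O(1)$ terms, giving $p_{ij}=-\I(a_i-a_j)M^{(1)}_{ij}$. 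Where you diverge is your third movement: the paper contains no uniqueness argument at all --- no $\det J\equiv 1$, no Liouville or vanishing-lemma step, no discussion of removability of the singularities of $\tilde{M}M^{-1}$ at the $\lam_j$'s. What you flagged as ``the substantive step'' and ``the main obstacle'' is simply absent; the theorem, as the paper interprets it, only asserts that the eigenfunction-built $M$ satisfies the stated RHP and that its large-$\lam$ asymptotics reproduce $p_{ij}$, and that is all the paper proves. Your extra movement is what would be required for the RHP to genuinely characterize $p_{ij}$ from the initial and boundary data alone (it is the standard completion in the literature this paper follows), though note you left its key removability step conditional, so it is a sketch rather than a proof; relative to what the paper actually establishes, however, your proposal is complete and strictly more thorough.
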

\begin{proof}
It only remains to prove (\ref{usolRHP}) and this equation follows from the large $\lam$ asymptotics of the eigenfunctions.
\par
We write the large $\lam$ asymptotics of $M$ as follows:
\be\label{Mlargelamda}
M(x,t,\lam)=M_0(x,t)+\frac{M_1(x,t)}{\lam}+\cdots.\qquad \lam\rightarrow+\infty.
\ee
And insert this equation into the equation (\ref{3wstlax}) and compare the coeffients of the same order $\lam$,
for the $O(\lam)$, we have $M_0$ is a diagonal matrix ; for the $O(1)$, we get $M_0=\id$ by comparing the diagonal elements,
and we can have the following equation by comparing the other elements
\be\label{pij}
-i[A,M_1]=P,
\ee
this equation reads the required result of $p_{ij}(x,t)$
\be
p_{ij}(x,t)=-i(a_i-a_j)M_{1,ij}(x,t).
\ee
\end{proof}

{\bf Acknowledgements}
The work of  Xu was partially supported by Excellent Doctor Research Funding Project of  Fudan University.
The work described in this paper
was supported by grants from the National Science
Foundation of China (Project No.10971031;11271079), Doctoral Programs Foundation of
the Ministry of Education of China, and the Shanghai Shuguang Tracking Project (project 08GG01).


\begin{thebibliography}{XXXX}

\bibitem{f1} A. S. Fokas, A unified transform method for solving linear and certain nonlinear PDEs,
{\em Proc. R. Soc. Lond. A} {\bf 453}(1997), 1411-1443.

\bibitem{f2} A. S. Fokas, Integrable nonlinear evolution equations on the half-line,
{\em Commun. Math. Phys.} {\bf 230}(2002), 1-39.

\bibitem{f3} A.S. Fokas, A unified approach to boundary value problems, in: CBMS-NSF Regional Conference
Series in Applied Mathematics, SIAM, 2008.

\bibitem{abmfs1} A. Boutet De Monvel, A.S. Fokas, D. Shepelsky, Integrable nonlinear evolution equations on
a finite interval, {\em Comm. Math. Phys.} {\bf 263} (2006) 133-172.

\bibitem{l1} J. Lenells, Boundary value problems for the stationary axisymmetric Einstein equations:
a disk rotating around a black hole, {\em Comm. Math. Phys.} {\bf 304} (2011) 585-635.

\bibitem{l2} J. Lenells, A.S. Fokas, Boundary-value problems for the stationary axisymmetric Einstein equations:
a rotating disc, {\em Nonlinearity }{\bf 24} (2011) 177-206.

\bibitem{l3} J. Lenells, Initial-boundary value problems for integrable evolution equations with $3\times 3$
Lax pairs, {\em Phys D }{\bf 241}(2012) 857-875.

\bibitem{l4} J. Lenells, The Degasperis-Procesi equation on the half-line, {\em Nonlinear Analysis }
{\bf 76}(2013) 122-139.

\bibitem{BC} R. Beals, R. R. Coifman, Scattering and inverse scattering for first order systems,
{\em Comm. Pure Appl. Math.} {\bf 37} (1984), 39-90.

\bibitem{dz} P. Deift and  X. Zhou, A steepest descent method for oscillatory Riemann--Hilbert problems,
{\em Ann. of Math. }(2) {\bf 137}(1993), 295-368.

\bibitem{pp} B. Pelloni and D. Pinotsis, Boundary value problems for the N-wave
interaction equation, {\em Phys. Lett. A} {\bf 373} (2009), 1940-1950.

\bibitem{GG} V. S. Gerdjikov and G. G. Grahovski, On the 3-wave equations with constant boundary
conditions, arXiv:1204.5346.

\bibitem{GGK} V. S. Gerdjikov, G. G. Grahovski and N. A. Kostov, On N-wave type
systems and their gauge equivalent, {\em Euro. Phys. J. B} {\bf 29} (2002) 243–
248.

\bibitem{LYS} Y. S. Li, Soliton and Integrable System, Advanced Series in Nonlinear Science,(1999) (in Chinese).

\bibitem{C} S. C. Chiu, On the self-induced transparency effect of the three-wave resonant
process, {\em J. Math. Phys.} {\bf 19} (1978), 168-176.

\bibitem{FD} F. Calogero, A. Degasperis, Novel solution of the system describing the resonant
interaction of three waves, {\em Phys. D} {\bf 200} (2005), 242-256.

\bibitem{F1} F. Calogero, Universality and integrability of the nonlinear evolution PDE¡¯s describing
N-wave integrations, {\em J. Math. Phys.}, {\bf 30} (1989), 28-40.

\bibitem{F2} F. Calogero, Solutions of certain integrable nonlinear PDE¡¯s describing nonresonant
N-wave integrations, {\em J. Math. Phys.}, {\bf 30} (1989), 639-654.

\bibitem{FT} L. D. Faddeev ,  L. A Takhtadjan., Hamiltonian approach in the theory of solitons,
Springer Verlag, Berlin (1987).

\bibitem{K} D. J Kaup., The three-wave interaction-a nondispersive phenomenon, {\em Stud. Appl.
Math.} {\bf 55} (1976), 9-44.

\bibitem{KRB} D. J Kaup., A. Reiman , A. Bers , Space-time evolution of nonlinear three-wave
interactions. I. Interactions in an homogeneous medium, {\em Rev. Mod. Phys.} {\bf 51} (1979),
275-310.

\bibitem{M} S. V. Manakov, An example of a completely integrable nonlinear wave field with
non-trivial dynamics (Lee model), {\em Teor. Mat. Phys.} {\bf 28} (1976), 172-179.

\bibitem{ZM} V. E. Zakharov, S. V Manakov., Exact theory of resonant interaction of wave
packets in nonlinear media, {\em INF preprint 74-41}, Novosibirsk (1975) (In Russian).

\bibitem{ZM2} V. E. Zakharov, S. V Manakov., On the theory of resonant interaction of wave
packets in nonlinear media, {\em Zh. Exp. Teor. Fiz.}, {\bf 69} (1975), 1654-1673 (In Russian).

\bibitem{ZMNP} V. E. Zakharov, S. V. Manakov, S. P. Novikov, L. I. Pitaevskii, Theory
of solitons: the inverse scattering method, Consultant Bureau, Plenum Press, N.Y.
(1984).

\end{thebibliography}
\end{document}